\documentclass[conference]{IEEEtran}
\IEEEoverridecommandlockouts

\usepackage[noadjust]{cite} % Improves presentation of citations, like [3--7].
\usepackage{multirow}
\usepackage{amsmath}
\usepackage{overpic}
\usepackage{amsopn}
\usepackage{amssymb}
\usepackage{lettrine}
\usepackage{accents}
\usepackage{url}
\usepackage{mathrsfs}
\usepackage{pifont}
\usepackage{amsthm}
\usepackage{subcaption}
\usepackage{bm}         % for nice bold math symbols using \bm{...}
\usepackage{fix-cm}     % permit arbitrary sizes of Computer Modern
\usepackage{algorithm}
\usepackage{algorithmicx}
\usepackage{algpseudocode}

\makeatletter
\newcommand{\removelatexerror}{\let\@latex@error\@gobble}
\makeatother
\graphicspath{{Images/}{./}}

\usepackage{mleftright} %provides \mleft, \mright to fix issue with spacing
\mleftright                 %redefines all \left and \right to behave
\usepackage{mathdots}   % fixes \ddots and \vdots to scale properly

\makeatletter
\def\IEEElabelanchoreqn#1{\bgroup
	\def\@currentlabel{\p@equation\theequation}\relax
	\def\@currentHref{\@IEEEtheHrefequation}\label{#1}\relax
	\Hy@raisedlink{\hyper@anchorstart{\@currentHref}}\relax
	\Hy@raisedlink{\hyper@anchorend}\egroup}
\makeatother

\usepackage[utf8]{inputenc} 
\usepackage[T1]{fontenc} % To force 8-bit encoding so that
\usepackage{tikz}
\usepgflibrary{arrows.meta}
\usetikzlibrary{positioning}
\usetikzlibrary{shadows} %used also in mdframed
\usetikzlibrary{calc}
\usetikzlibrary{patterns}
\usepackage{pgfplots}
\pgfplotsset{compat=1.15}
\usepgfplotslibrary{units}
\usetikzlibrary{spy,backgrounds}
\usetikzlibrary{decorations.markings}
\usepackage{pgfplotstable}

\newtheorem{theorem}{Theorem}
\newtheorem{lemma}{Lemma}

\newtheorem{proposition}{Proposition}

\usepackage{hyphenat} % Package for hypenation-stuff

\newcommand\myprime{\mkern-0.5mu\raise0.6ex\hbox{$\scriptscriptstyle\prime$}}
\newcommand{\midk}[1]{\kern0.1em #1 \kern0.1em}
\newcommand{\middlek}[1]{\kern0.1em \middle#1 \kern0.1em}
\newcommand{\bigk}[1]{\kern-0.1em \bigm#1 \kern-0.1em}
\newcommand{\Bigk}[1]{\kern-0.1em \Bigm#1 \kern-0.1em}
\newcommand{\biggk}[1]{\kern-0.1em \biggm#1 \kern-0.1em}
\newcommand{\Biggk}[1]{\kern-0.1em \Biggm#1 \kern-0.1em}

\newcommand{\tn}[1]{\textnormal{#1}}

\newcommand{\trans}[1]{#1^{\textnormal{\textsf{\tiny T}}}}

	\newcommand{\const}[1]{\textnormal{\usefont{U}{eur}{m}{n}\selectfont #1}}
	\newcommand{\HH}{\mathop{}\!\mathsf{H}}

	\newcommand{\hh}{\mathop{}\!\const{h}}
	\newcommand{\relD}{\mathop{}\!\const{D}}
	\newcommand{\relDf}[2]{\relD\left(#1 \kern0.1em\middle\|\kern0.1em #2\right)}
	\newcommand{\erelDf}[2]{\relD(#1 \kern0.1em\|\kern0.1em #2)}
	\newcommand{\bigrelDf}[2]{\relD\bigl(#1 \kern-0.1em \bigm\| \kern-0.1em#2\bigr)}
	\newcommand{\BigrelDf}[2]{\relD\Bigl(#1 \kern-0.1em \Bigm\| \kern-0.1em#2\Bigr)}
	\newcommand{\biggrelDf}[2]{\relD\biggl(#1 \kern-0.1em \biggm\| \kern-0.1em#2\biggr)}
	\newcommand{\BiggrelDf}[2]{\relD\Biggl(#1 \kern-0.1em \Biggm\| \kern-0.1em#2\Biggr)}
	
	\newcommand{\nt}{n_{\tn{T}}}
	\newcommand{\nr}{n_{\tn{R}}}

	\newcommand{\Prvcond}[2]{\Pr\left[#1 \kern0.1em\middle|\kern0.1em #2\right]}
	\newcommand{\ePrvcond}[2]{\Pr[#1 \kern0.1em|\kern0.1em #2]}
	\newcommand{\bigPrvcond}[2]{\Pr\bigl[#1 \kern-0.1em \bigm| \kern-0.1em#2\bigr)}
	\newcommand{\BigPrvcond}[2]{\Pr\Bigl[#1 \kern-0.1em \Bigm| \kern-0.1em#2\Bigr)}
	\newcommand{\biggPrvcond}[2]{\Pr\biggl[#1 \kern-0.1em \biggm| \kern-0.1em#2\biggr)}
	\newcommand{\BiggPrvcond}[2]{\Pr\Biggl[#1 \kern-0.1em \Biggm| \kern-0.1em#2\Biggr]}
	
	\newcommand{\Prscond}[2]{\Pr\left(#1 \kern0.1em\middle|\kern0.1em #2\right)}
	\newcommand{\ePrscond}[2]{\Pr(#1 \kern0.1em|\kern0.1em #2)}
	\newcommand{\bigPrscond}[2]{\Pr\bigl(#1 \kern-0.1em \bigm| \kern-0.1em#2\bigr)}
	\newcommand{\BigPrscond}[2]{\Pr\Bigl(#1 \kern-0.1em \Bigm| \kern-0.1em#2\Bigr)}
	\newcommand{\biggPrscond}[2]{\Pr\biggl(#1 \kern-0.1em \biggm| \kern-0.1em#2\biggr)}
	\newcommand{\BiggPrscond}[2]{\Pr\Biggl(#1 \kern-0.1em \Biggm| \kern-0.1em#2\Biggr]}
	
	\newcommand{\Exp}{\operatorname{\textnormal{\textsf{E}}}}

	\newcommand{\Econd}[3][]{\Exp_{#1}\left[#2 \kern0.1em\middle|\kern0.1em #3\right]}
	\newcommand{\eEcond}[3][]{\Exp_{#1}[#2 \kern0.1em|\kern0.1em #3]}
	\newcommand{\bigEcond}[3][]{\Exp_{#1}\bigl[#2 \kern-0.1em \bigm| \kern-0.1em #3\bigr]}
	\newcommand{\BigEcond}[3][]{\Exp_{#1}\Bigl[#2 \kern-0.1em \Bigm| \kern-0.1em #3\Bigr]}
	\newcommand{\biggEcond}[3][]{\Exp_{#1}\biggl[#2 \kern-0.1em \biggm| \kern-0.1em #3\biggr]}
	\newcommand{\BiggEcond}[3][]{\Exp_{#1}\Biggl[#2 \kern-0.1em \Biggm| \kern-0.1em #3\Biggr]}
	
	\newcommand{\dd}{\mathop{}\!\mathrm{d}}

	\newcommand{\supp}{\operatorname{supp}}

	\makeatletter
	
	\newcommand{\Rmnum}[1]{\expandafter\@slowromancap\romannumeral #1@}
	\makeatother
	
	\usepackage{threeparttable}
	\usepackage{diagbox}
	\usepackage{array}
	\usepackage{boldline}
	\usepackage{caption}
	\usepackage{dsfont}
	
	\usepackage{algpseudocode}
	
	\allowdisplaybreaks
	\usetikzlibrary{datavisualization}
	
	\usetikzlibrary{spy,backgrounds}
	\usetikzlibrary{decorations.markings}

	\def\BibTeX{{\rm B\kern-.05em{\sc i\kern-.025em b}\kern-.08em
			T\kern-.1667em\lower.7ex\hbox{E}\kern-.125emX}}
	
\begin{document}
		
		\title{Performance Evaluation of A Certain Transceiver Architecture for Multiple-Input Multiple-Output Phase-Modulated Channels\\
			\thanks{This work is supported in part by the National Natural Science Foundation of China under Grant No. 62401593 and 12374275, the Young Elite Scientists Sponsorship Program by the China Association for Science and Technology, and Innovation Research Foundation of National University of Defense Technology. (Corresponding Authors: Ru-Han Chen and Shijun Zhu)
			}
		}
		
		\author{\IEEEauthorblockN{1\textsuperscript{st} Hengyu Cui}
			\IEEEauthorblockA{\textit{Department of Information Physics and Engineering} \\
				\textit{Nanjing University of Science and Technology}\\
				Nanjing, China\\
				1719170887@qq.com}
			\and
			
			\IEEEauthorblockN{2\textsuperscript{nd} Ru-Han Chen}
			\IEEEauthorblockA{\textit{Sixty-Third Research Institute} \\
				\textit{National University of Defense Technology}\\
				Nanjing, China \\
				tx\_rhc22@nudt.edu.cn
			}
			\and
			
			\IEEEauthorblockN{3\textsuperscript{rd} Zhenyao He}
			\IEEEauthorblockA{\textit{Sixty-Third Research Institute} \\
				\textit{National University of Defense Technology}\\
				Nanjing, China \\
				zyhe1111@nudt.edu.cn
			}
			\and
			
			\IEEEauthorblockN{4\textsuperscript{th} Shijun Zhu}
			\IEEEauthorblockA{\textit{Department of Information Physics and Engineering} \\
				\textit{Nanjing University of Science and Technology}\\
				Nanjing, China \\
				shijunzhu@njust.edu.cn
			}
			\and
			
			\IEEEauthorblockN{5\textsuperscript{th} Ruoqi Sun}
			\IEEEauthorblockA{\textit{Sixty-Third Research Institute} \\
				\textit{National University of Defense Technology}\\
				Nanjing, China \\
				202412491635@nuist.edu.cn
			}
			\and
			
			\IEEEauthorblockN{6\textsuperscript{th} Yeqin Tai}
			\IEEEauthorblockA{\textit{Department of Information Physics and Engineering} \\
				\textit{Nanjing University of Science and Technology}\\
				Nanjing, China \\
				taiyeqin1412@163.com
			}
		}
		
		\maketitle
		
		\begin{abstract}
			For multiple-input multiple-output (MIMO) channels with phase modulation, we recently proposed a method of unitarily transforming the channel matrix into a certain row-echelon form, by which the original MIMO channel can be converted into a certain number of scalar sub-channels with two phase inputs, thereby forming an annulus constellation geometry, and corrupted by both the additive white Gaussian noise and weak self-interference. In this paper, several bounds are derived to evaluate the fundamental limit of such a specific transceiver architecture. Two upper bounds are obtained by upper-bounding the capacity of a scalar channel with an annulus support constraint from the perspective of the convex geometry, while a lower bound is obtained by the standard entropy power inequality. Numerical results show that the gaps between these bounds are small at high signal-to-noise ratios for the MIMO phase-modulated channels over the Rayleigh fading and the single-input multiple-output symbiotic communication system assisted by a reconfigurable intelligent surface.
		\end{abstract}
		
		\begin{IEEEkeywords}
			Channel capacity, multiple-input multiple-output, non-Gaussian noise, phase modulation, reconfigurable intelligent surface, symbiotic radio.
		\end{IEEEkeywords}
		
		\section{Introduction}
		
		In multiple-input multiple-output (MIMO) communication systems, the constant-envelope transmitted signal can significantly improve the efficiency of power amplifiers, which highlights the importance of phase modulation \cite{computing2005capacity,yousefbeiki2014efficient}. Meanwhile, to reduce the implementation cost, the concept of the single radio-frequency (RF) MIMO has been proposed in \cite{li2021single,karasik2021single,chen2024when}, by which a MIMO phase-modulated communication system can be equivalently realized by using only one RF chain at the transmitter and temporal-spatial coding of a low-cost reconfigurable intelligent surface (RIS).
		
		Different from the classic MIMO channel subject to a total average-power constraint, the MIMO phase-modulated channel is governed by the non-convex support constraint, since the information can only be modulated onto the phases of input signals. To the best of the authors' knowledge, the degrees of freedom (DoFs) of this channel are exactly characterized in \cite[Thm. 2]{cheng2021dof}, whereas its capacity remains unknown. Existing relevant results are limited to special cases, e.g. scalar phase-modulated channels \cite{wyner66_1} or certain discrete alphabets \cite{He2005TIT}. In our previous work \cite{chen2024when}, we have derived the achievable rate of the transceiver based on the QR decomposition and the successive interference cancellation (SIC), by which the MIMO channel is decomposed into several parallel sub-channels. Since there is only one phase input in each sub-channel (except for the last one), almost half DoFs will be lost in the regime with strong spatial correlation. Fortunately, in our recent work \cite{sun2026grouped}, we show that such a drawback can be overcome by a low-complexity matrix decomposition method, termed the combinatorially pairing (CP) algorithm, which transforms the channel matrix into a certain row-echelon form such that two independent phase inputs are grouped in each sub-channel (also except for the last one), thus attaining almost full DoFs. Lack of performance evaluation on this novel transceiver for the MIMO phase-modulated channel motivates this work.
		
		In this paper, we analyze the achievable rate of the above-mentioned transceiver architecture. Main technical challenges arise from two aspects: i) the residual error caused by the CP decomposition yields non-Gaussian self-inteference; and ii) there are two phase-modulated inputs in each sub-channel, which corresponds to a non-convex constellation geometry. Consequently, the existing capacity analysis methods \cite{smith1971,mckellips2004,thangaraj2017,rassouli2016}, developed for the channels with a convex input support or the additive white Gaussian noise (AWGN), can not be applied to our considered channel.
		
		The main contributions of this work are as follows:
		\begin{itemize}
			\item By characterizing the non-Gaussian noise and the annular support constraints, two upper bounds are derived from the perspective of the convex geometry, while a lower bound is obtained by the entropy power inequality (EPI).
			\item We further apply the above theoretical framework to two scenarios. Numerical results show that the derived upper and lower bounds are remarkably close to each other.
		\end{itemize}
		
		\section{System Model and Problem Formulation}
		
		\subsection{System Model}
		
		We consider a MIMO phase-modulated channel with $\nt$ transmit antennas and $\nr$ receive antennas as follows
		\begin{equation}
			\mathbf{Y} = \sqrt{\mathrm{snr}}\,\mathsf{H}\cdot \exp(j\bm{\Theta}) + \mathbf{Z},
		\end{equation}
		where the $\nr$-dimensional complex-valued vector $\mathbf{Y}$ denotes the channel input, the $\nt$-dimensional complex-valued vector $\exp(j\bm{\Theta}) \in \mathbb{C}^{\nt}$ denotes $\nt$ phase-modulated inputs, the $\nr\times \nt$ complex-valued matrix $\mathsf{H} \in \mathbb{C}^{\nr \times \nt}$ consists of channel gains from the transmitter to the receiver, $\nr$-dimensional complex-valued random vector $\mathbf{Z}\sim \mathcal{CN}(\bm{0}_{\nr},\mathsf{I}_{\nr})$ denotes the additive noise, and the positive scaling factor $\mathrm{snr}$ is used to measure the signal-to-noise ratio (SNR).
		
		\subsection{Combinatorially Pairing Algorithm}
		
		For the MIMO channel, a classic type of transceiver architectures is unitarily transforming the channel matrix into some row-echolon form, and then decoding these row sub-channels from the bottom to top by the SIC. Unlike the conventionally-used QR decomposition, the CP algorithm recently proposed in \cite{sun2026grouped}, unitarily transforms the MIMO channel matrix into a special row-echelon form with very small residual error, which groups two inputs in each sub-channel (except for the last one). Due to space limitations, we refer the reader interested in the CP algorithm to \cite[Sec.~IV]{sun2026grouped}. Instead, we directly express $\nr$ scalar sub-channels as follows
		\begin{equation}\label{sys_model_sub}
			\tilde{Y}_{i} = \sqrt{\mathrm{snr}}\,\left(c_{i,2i-1}e^{j\Theta_{2i-1}} + c_{i,2i}e^{j\Theta_{2i}}\right) + \tilde{N}_{i},
		\end{equation}
		for $i\in\{1,2,\dots,\nr\}$, where the total noise
		\begin{flalign}
			\tilde{N}_{i} = \sqrt{\mathrm{snr}}\Bigg(\sum_{k=1}^{2i-2}c_{i,k}e^{j\Theta_k}\Bigg)+\tilde{Z}_i,
		\end{flalign}
		takes the self-interference caused by the residual error of the CP algorithm into account, $\tilde{Z}_i\sim\mathcal{CN}(0,1)$ denotes the noise term transformed by $\mathbf{Z}$, and $2i$ complex-valued coefficients $\{c_{i,k}\}_{k=1}^{2i}$ are outputted from the CP algorithm. Due to the existence of the self-interference term $\sum_{k=1}^{2i-2}c_{i,k}e^{j\Theta_k}$, the total noise $\tilde{N}_i$ is non-Gaussian distributed. Nevertheless, its variance, denoted by $\mathcal{E}_i$, can be explicitly upper-bounded as follows
		\begin{flalign}
			\mathcal{E}_{i}
			&=\mathsf{var}\Bigg[\sqrt{\mathrm{snr}}\Bigg(\sum_{k=1}^{2i-2}c_{i,k}e^{j\Theta_k}\Bigg)\Bigg]+1\\
			&\le \mathrm{snr}\cdot\left(\sum_{k=1}^{2i-2}|c_{i,k}|^2\right)+1,\label{eq:var}
		\end{flalign}
		where Eq.~\eqref{eq:var} follows from the statistical independence among all phase-modulated inputs $\exp(j\Theta_1),\ldots,\exp(j\Theta_{\nt})$.
		
		For notational simplicity, we define the equivalent input for the $i$-th sub-channel as follows
		\begin{equation}\label{S_i}
			S_i \triangleq \sqrt{\mathrm{snr}}\,\left(c_{i,2i-1}e^{j\Theta_{2i-1}} + c_{i,2i}e^{j\Theta_{2i}}\right).
		\end{equation}
		According to \cite[Lemma 1]{sun2026grouped}, the sum of two phase-modulated components lies within a two-dimensional annulus. Consequently, the equivalent input signal $S_i$ is supported on the annulus $\mathcal{P}_i=\{S_i\in\mathbb{C}\mid r_i\le|S_i|\le R_i\}$, where the inner and the outer radii are given by
		\begin{subequations}
			\begin{align}
				r_i &= \sqrt{\mathrm{snr}}\,\bigl||c_{i,2i-1}|-|c_{i,2i}|\bigr|,\label{eq:ri}\\
				R_i &= \sqrt{\mathrm{snr}}\,\bigl(|c_{i,2i-1}|+|c_{i,2i}|\bigr).\label{eq:Ri}
			\end{align}
		\end{subequations}
		We refer the reader interested in the difference between signaling over a single phase input and that over two phase inputs, to \cite[Fig. 2]{sun2026grouped}, which intuitively explains why the transceiver based on the CP algorithm can obtain higher DoFs.
		
		Since, in the considered transceiver architecture, the equivalent input $S_i$ is only decoded from the $i$-th sub-channel \eqref{sys_model_sub}, the total achievable rate of the considered transceiver architecture is given as
		\begin{flalign}
			\const{R}_{\text{total}}=\sum_{i=1}^{\nr}\max_{r_i\le|S_i|\le R_i}I(S_i;\tilde{Y}_i).
		\end{flalign}
		
		\subsection{Problem Formulation}
		
		For a general description of our result, we drop the subscript $i$ and consider the scalar complex-valued channel with non-Gaussian additive noise as follows
		\begin{equation}\label{sub-channel}
			Y=S+N,
		\end{equation}
		where the complex-valued input $S$ satisfies the annulus support constraint $|S|\in[r,R]$ and the additive noise $N$ has zero mean and variance $\mathcal{E}$ and is independent of the input $S$.
		
		Clearly, bounds on the total achievable rate $\const{R}_{\text{total}}$ of the considered transceiver can be obtained by upper-bounding and lower-bounding the channel capacity of the scalar channel in \eqref{sub-channel}, denoted as $\const{C}(r,R,\mathcal{E})$, which is defined as
		\begin{flalign}
			\const{C}(r,R,\mathcal{E})
			&\triangleq\max_{r\le|S|\le R}I(S;Y)\\
			&=\max_{r\le|S|\le R}\hh(Y)-\hh(N).\label{eq:capacity}
		\end{flalign}
		
		\section{Bounds on $\const{C}(r,R,\mathcal{E})$}
		
		To avoid the ambiguity, we regard the complex-valued channel output $Y$ as the two-dimensional real-valued random vector $\bar{\mathbf{Y}}$ consisting of the real and the imaginary parts of $Y$. The corresponding two-dimensional input $\bar{\mathbf{S}}$ is accordingly subject to an annulus support constraint as follows
		\begin{flalign}
			\supp\bar{\mathbf{S}}\subseteq\mathcal{A}
			\triangleq
			\left\{
			\mathbf{x}\in\mathbb{R}^2:r\le\left\|\mathbf{x}\right\|_2\le R
			\right\}.
		\end{flalign}
		The corresponding two-dimensional real-valued noise term is denoted by $\bar{\mathbf{N}}=\trans{(\bar{N}_1,\bar{N}_2)}$, which has zero mean and satisfies $\mathbb{E}[\bar{N}_1^2+\bar{N}_2^2]\le\mathcal{E}$.
		
		Since the differential entropy $\hh(\bar{\mathbf{N}})$ of the noise term $\bar{\mathbf{N}}$ in the channel \eqref{sub-channel} is independent of the channel input $\bar{\mathbf{S}}$, the bounds on the channel capacity $\const{C}(r,R,\mathcal{E})$ can be obtained by upper-bounding or lower-bounding the differential entropy $\hh(\bar{\mathbf{Y}})$.
		
		\subsection{EPI-Based Lower Bound}
		
		We derive a lower bound on the channel capacity $\const{C}(r,R,\mathcal{E})$ via the EPI.
		
		\begin{theorem}\label{thm:1}
			For the channel~\eqref{sub-channel}, its capacity is lower-bounded as
			\begin{equation}\label{eq:EPI_lower_bound}
				\const{C}(r,R,\mathcal{E})\geq
				\log\left(
				1+\frac{\pi(R^2-r^2)}{\exp\big(\hh(N)\big)}
				\right),
			\end{equation}
			where $\hh(N)$ represents the differential entropy of $N$.
		\end{theorem}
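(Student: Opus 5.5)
The plan is to evaluate the mutual information at one specific admissible input, namely $\bar{\mathbf{S}}$ uniformly distributed on the annulus $\mathcal{A}$, and to lower-bound $\hh(\bar{\mathbf{Y}})$ for this choice with the EPI. Since $\const{C}(r,R,\mathcal{E})$ is a maximum over all inputs supported on $\mathcal{A}$, every specific input gives a lower bound. By \eqref{eq:capacity},
\begin{equation*}
\const{C}(r,R,\mathcal{E})\ge \hh(\bar{\mathbf{S}}+\bar{\mathbf{N}})-\hh(\bar{\mathbf{N}}),
\end{equation*}
where $\bar{\mathbf{S}}$ is uniform on $\mathcal{A}$ and independent of $\bar{\mathbf{N}}$. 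We identify $\hh(N)$ with $\hh(\bar{\mathbf{N}})$, the differential entropy of the two-dimensional real representation; this is the convention under which the complex entropy is measured.

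The first step is to compute the input entropy. The area of $\mathcal{A}$ is $\pi(R^2-r^2)$, so
\begin{equation*}
\hh(\bar{\mathbf{S}})=\log\bigl(\pi(R^2-r^2)\bigr).
\end{equation*}
The second step applies the EPI in dimension $n=2$. The entropy power is $e^{\frac{2}{n}\hh(\cdot)}=e^{\hh(\cdot)}$ up to the common constant $1/(2\pi e)$, which cancels. This gives
\begin{equation*}
e^{\hh(\bar{\mathbf{Y}})}\ge e^{\hh(\bar{\mathbf{S}})}+e^{\hh(\bar{\mathbf{N}})}=\pi(R^2-r^2)+e^{\hh(N)}.
\end{equation*}
Taking logarithms and subtracting $\hh(N)$ yields exactly
\begin{equation*}
\log\Bigl(1+\pi(R^2-r^2)/\exp(\hh(N))\Bigr),
\end{equation*}
which is \eqref{eq:EPI_lower_bound}.

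The main point needing care is checking the hypotheses of the EPI. Both $\bar{\mathbf{S}}$ and $\bar{\mathbf{N}}$ must be independent with well-defined densities and finite differential entropies. Independence is part of the channel model. The uniform input plainly has a density. For $\bar{\mathbf{N}}$, the noise is the sum of a bounded self-interference term and an independent Gaussian $\tilde{Z}$. It therefore has a smooth density and finite second moment $\le\mathcal{E}$. Its entropy is bounded above by that of a Gaussian with the same variance, and bounded below by $\hh(\tilde{Z})$, since adding an independent variable does not decrease differential entropy. So $\hh(N)$ is finite and the EPI applies.

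One further check is the degenerate case $r=R$. There the annulus has zero area, and the right-hand side of \eqref{eq:EPI_lower_bound} becomes $\log 1=0$, which holds trivially. The EPI step therefore only needs to be invoked when $R>r$.
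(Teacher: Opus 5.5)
Your proposal is correct and follows the paper's proof: take $\bar{\mathbf{S}}$ uniform on $\mathcal{A}$ so that $\hh(\bar{\mathbf{S}})=\log(\pi(R^2-r^2))$, apply the two-dimensional EPI $e^{\hh(\bar{\mathbf{Y}})}\ge e^{\hh(\bar{\mathbf{S}})}+e^{\hh(\bar{\mathbf{N}})}$, and subtract $\hh(\bar{\mathbf{N}})$. Your extra checks are care the paper omits and do not change the argument: you verify finiteness of $\hh(N)$ using the bounded-interference-plus-Gaussian structure of the application, whereas the general statement simply presupposes it, and you handle the degenerate case $r=R$ separately.
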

		
		\begin{IEEEproof}
			Let $\bar{\mathbf{S}}$ be uniformly distributed on the annular region $\mathcal{A}$. Its differential entropy is given as
			\begin{equation}\label{eqn:input_entropy}
				\hh(\bar{\mathbf{S}})=\log\bigl(\pi(R^2-r^2)\bigr).
			\end{equation}
			
			The EPI for two independent two-dimensional random vectors states
			\begin{equation}
				e^{\hh(\bar{\mathbf{Y}})}
				\ge e^{\hh(\bar{\mathbf{S}})}+e^{\hh(\bar{\mathbf{N}})}.
			\end{equation}
			Hence, we have
			\begin{flalign}
				I(S;Y)
				&=\hh(Y)-\hh(N)\\
				&\ge\log\bigl(e^{\hh(\bar{\mathbf{S}})}+e^{\hh(\bar{\mathbf{N}})}\bigr)-\hh(\bar{\mathbf{N}})\\
				&=\log\left(1+e^{\hh(\bar{\mathbf{S}})-\hh(\bar{\mathbf{N}})}\right).\label{eqn:thm1-1}
			\end{flalign}
			Substituting Eq.~\eqref{eqn:input_entropy} into Eq.~\eqref{eqn:thm1-1}, we conclude the theorem.
		\end{IEEEproof}
		
		\subsection{Capacity Upper Bounds}
		
		Since the channel output $\bar{\mathbf{Y}}$ can be expressed as the sum of two independent random variables (i.e., the channel input $\bar{\mathbf{S}}$ and the noise term $\bar{\mathbf{N}}$), its differential entropy can be upper-bounded from the perspective of the convex geometry \cite{jog2016geometric}. The below approach is also applicable to other support constraints on the channel input.
		
		We first note that the differential entropy can be related with the $2n$-dimensional volume of the typical set $\mathcal{T}^{\epsilon}_n$ of $\bar{\mathbf{Y}}$ (with respect to $n$ channels uses) as follows \cite[Eq. (4.5)]{jog2016geometric}
		\begin{flalign}
			\hh(\bar{\mathbf{Y}})
			=
			\lim_{\epsilon\to0_{+}}
			\lim_{n\to+\infty}
			\frac{1}{n}
			\log\left(
			\mathsf{vol}_{2n}\left(
			\mathcal{T}^{\epsilon}_n
			\right)
			\right).
		\end{flalign}
		Note that the channel input is subject to a support constraint $\supp\bar{\mathbf{S}}\subseteq\mathcal{A}$ and the noise term satisfies $\mathbb{E}[\bar{N}_1^2+\bar{N}_2^2]\le\mathcal{E}$. Then the volume of the typical set $\mathcal{T}^{\epsilon}_n$ can be bounded as follows
		\begin{flalign}\label{eq:bound_typical_set}
			\mathsf{vol}_{2n}\left(
			\mathcal{T}^{\epsilon}_n
			\right)
			\le
			\mathsf{vol}_{2n}\left(
			\mathcal{A}^n\oplus\mathcal{B}_{2n}(\sqrt{n\mathcal{E}})
			\right),
		\end{flalign}
		where the notation $\oplus$ denotes the Minkowski summation, and $\mathcal{B}_{2n}(\sqrt{n\mathcal{E}})$ denotes the $2n$-ball of radius $\sqrt{n\mathcal{E}}$.
		
		To evaluate the volume of the typical set, we reformulate an existing result as follows.
		
		\begin{lemma}[{\cite[Eq.~(3.167)]{Jog2014thesis}}]\label{lemma:1}
			For any convex body $\mathcal{K}\subseteq\mathbb{R}^d$ with the intrinsic volumes $\alpha_0,\ldots,\alpha_{d-1}$, and $\alpha_d$ (i.e., equal to the $d$-dimensional volume of $\mathcal{K}$), define the exponential growth rate function
			\begin{flalign}\label{eq:growth_rate1}
				\ell_{\mathcal{K}}(\nu)
				\triangleq
				\limsup_{n\to\infty}
				\frac{1}{n}
				\log\left(
				\mathsf{vol}_{nd}
				\left(
				\mathcal{K}^n\oplus
				\mathcal{B}_{nd}\bigl(\sqrt{nd\nu}\bigr)
				\right)
				\right),
			\end{flalign}
			for all $\nu>0$, and the logarithmic moment generating function
			\begin{flalign}\label{eq:LMG}
				\Lambda_{\mathcal{K}}(t)
				\triangleq
				\log\left(
				\sum_{k=0}^{d}\alpha_k\exp(kt)
				\right).
			\end{flalign}
			Then
			\begin{flalign}\label{eq:growth_rate2}
				\ell_{\mathcal{K}}(\nu)
				=
				\sup_{\theta\in[0,1]}
				\left(
				-\Lambda_{\mathcal{K}}^*(d\theta)
				+
				\frac{d(1-\theta)}{2}
				\log\frac{2\pi e\nu}{1-\theta}
				\right),
			\end{flalign}
			where $\Lambda_{\mathcal{K}}^*(\cdot)$ denotes the convex conjugate of the logarithmic moment generating function $\Lambda_{\mathcal{K}}(\cdot)$.
		\end{lemma}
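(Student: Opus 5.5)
The plan is to derive the formula for $\ell_{\mathcal{K}}(\nu)$ from the Steiner formula in dimension $nd$ combined with a large-deviations (Laplace-principle) evaluation of the resulting sum. First, note that $\mathcal{K}^n\oplus\mathcal{B}_{nd}(\rho)$ is the $\rho$-parallel body of the convex body $\mathcal{K}^n\subseteq\mathbb{R}^{nd}$. By Steiner's formula,
\begin{equation*}
\mathsf{vol}_{nd}\bigl(\mathcal{K}^n\oplus\mathcal{B}_{nd}(\rho)\bigr)=\sum_{m=0}^{nd}V_m(\mathcal{K}^n)\,\omega_{nd-m}\,\rho^{nd-m},
\end{equation*}
where $\omega_k=\pi^{k/2}/\Gamma(k/2+1)$ is the volume of the unit $k$-ball. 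The intrinsic volumes of a product are given by the convolution rule $V_m(\mathcal{K}^n)=\sum_{k_1+\cdots+k_n=m}\prod_{j}\alpha_{k_j}$, which is the coefficient of $x^m$ in $\bigl(\sum_{k=0}^d\alpha_kx^k\bigr)^n$. Define the probability law $p_k\propto\alpha_k e^{kt}$, i.e.\ the exponentially tilted law. Then $V_m(\mathcal{K}^n)$ is $\exp(n\Lambda_{\mathcal{K}}(0))$ times the probability that a sum of $n$ i.i.d.\ variables with law $\alpha_k/\sum\alpha$ equals $m$.

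Second, set $m=nd\theta$ with $\theta\in[0,1]$ and evaluate each factor on the exponential scale. Cram\'er's theorem, or equivalently a direct Chernoff bound together with the matching lower bound obtained by tilting, gives
\begin{equation*}
\tfrac1n\log V_{nd\theta}(\mathcal{K}^n)\to-\Lambda_{\mathcal{K}}^*(d\theta).
\end{equation*}
Here $\Lambda^*$ is the conjugate of the unnormalized $\Lambda$; the normalization constant cancels exactly because $\Lambda^*(x)=\sup_t(tx-\Lambda(t))$. Next, with $\rho=\sqrt{nd\nu}$, Stirling's formula gives
\begin{equation*}
\tfrac1n\log\bigl(\omega_{nd(1-\theta)}(nd\nu)^{nd(1-\theta)/2}\bigr)\to\tfrac{d(1-\theta)}{2}\log\tfrac{2\pi e\nu}{1-\theta}.
\end{equation*}
This holds because $\log\Gamma(k/2+1)\approx\tfrac k2\log\tfrac{k}{2e}$ with $k=nd(1-\theta)$. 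The factor $n$ cancels, leaving $\pi\cdot nd\nu\cdot 2e/(nd(1-\theta))$.

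Third, since the sum has only $nd+1$ terms, which is subexponential, the exponent of the sum equals the maximum of the term exponents. This yields the $\sup_{\theta}$ in the lattice points $\theta\in\{m/(nd)\}$. Continuity of both terms in $\theta$ (on the interior; at $\theta=1$ the second term vanishes in the limit, and $\Lambda^*$ is lower semicontinuous and convex) then upgrades the lattice supremum to the supremum over $[0,1]$. This also shows that the $\limsup$ is in fact a limit.

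The main obstacle is the uniformity needed in step two. Both the Cram\'er estimate and the Stirling estimate must hold uniformly in $m$ so that the maximum over the lattice converges to the supremum of the limit function. The difficulty concentrates near the endpoints $\theta\in\{0,1\}$, and near points where $\alpha_k=0$ makes $\Lambda^*$ infinite. I would handle the upper bound by a uniform Chernoff inequality, $V_m\le\exp(n\Lambda(t)-tm)$ for every $t$, which is exact and uniform. For the lower bound it suffices to exhibit a single near-optimal $\theta$ in the interior of the effective domain, where the local central limit theorem for the tilted law applies, followed by a continuity argument at the boundary.
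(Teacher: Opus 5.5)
Your proposal is correct. The paper gives no proof of this lemma; it imports it from Jog's thesis, and your route is essentially the derivation behind that result: Steiner's formula for the parallel body of $\mathcal{K}^n$, the convolution rule $V_m(\mathcal{K}^n)=[x^m]\bigl(\sum_k\alpha_kx^k\bigr)^n$, a Cram\'er-type exponent $-\Lambda_{\mathcal{K}}^*(d\theta)$, Stirling for the ball term, and the fact that $nd+1$ summands cost only a subexponential factor. Your uniformity argument also closes the proof: the exact bounds $V_m\le e^{-n\Lambda_{\mathcal{K}}^*(m/n)}$ and $\Gamma(k/2+1)\ge(k/(2e))^{k/2}$ give the upper bound term by term, a local limit theorem at one interior near-optimal $\theta$ gives the lower bound, and concavity of the objective in $\theta$ makes it continuous up to the endpoints.
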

		
		Note that the original support set $\mathcal{A}$ is non-convex, and hence, Lemma~\ref{lemma:1} can not be directly applied to upper-bound the right-hand side in Eq.~\eqref{eq:bound_typical_set}. To tackle this technical challenge, two different methods are given as follows.
		
		\subsubsection{Upper Bound via Support Relaxation}
		
		By relaxing the annular support constraint to $\supp\bar{\mathbf{S}}\subseteq\mathcal{D}$, where $\mathcal{D}$ denotes the two-dimensional closed disk as follows:
		\begin{flalign}
			\mathcal{D}
			\triangleq
			\left\{
			\mathbf{x}\in\mathbb{R}^2:
			\left\|\mathbf{x}\right\|_2\le R
			\right\}.
		\end{flalign}
		Due to the convexity of $\mathcal{D}$, we can directly apply the Lemma~\ref{lemma:1} to obtain the following result.
		
		\begin{lemma}
			Define
			\begin{flalign}
				t^{\star}(\theta)
				\triangleq
				\log\left(
				-\frac{1}{2R}
				+\frac{
					1+\sqrt{4(1-\frac{4}{\pi})\theta^2-4(1-\frac{4}{\pi})\theta+1}
				}{
					4R(1-\theta)
				}
				\right)
				\label{eq:t_theta}
			\end{flalign}
			and
			\begin{flalign}
				g_{\nu}(\theta)
				=&-2\theta t^{\star}(\theta)
				+\log\left(
				1+\pi Re^{t^{\star}(\theta)}
				+\pi R^2e^{2t^{\star}(\theta)}
				\right)
				\nonumber\\
				&+(1-\theta)\log\frac{2\pi e\nu}{1-\theta}
			\end{flalign}
			for $\theta\in(0,1)$, and let $g_{\nu}(0)=\log(2\pi e\nu)$ and $g_{\nu}(1)=\log(\pi R^2)$. The exponential growth rate function of the closed disk $\mathcal{D}$ is given by
			\begin{equation}
				\ell_{\mathcal{D}}(\nu)
				=
				\max_{\theta\in[0,1]}g_{\nu}(\theta),
				\quad\nu>0.
			\end{equation}
		\end{lemma}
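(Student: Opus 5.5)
The plan is to apply Lemma~\ref{lemma:1} with $d=2$ and $\mathcal{K}=\mathcal{D}$, and to evaluate the convex conjugate in closed form.

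\textbf{Intrinsic volumes and $\Lambda_{\mathcal{D}}$.} For a planar convex body, $\alpha_0=1$, $\alpha_1$ is half the perimeter and $\alpha_2$ is the area. For the disk of radius $R$ this gives $\alpha_0=1$, $\alpha_1=\pi R$ and $\alpha_2=\pi R^2$. Hence
\begin{equation*}
\Lambda_{\mathcal{D}}(t)=\log\bigl(1+\pi R e^{t}+\pi R^2 e^{2t}\bigr).
\end{equation*}
With $d=2$, the growth rate in Lemma~\ref{lemma:1} becomes
\begin{equation*}
\ell_{\mathcal{D}}(\nu)=\sup_{\theta\in[0,1]}\Bigl(-\Lambda_{\mathcal{D}}^*(2\theta)+(1-\theta)\log\tfrac{2\pi e\nu}{1-\theta}\Bigr).
\end{equation*}
It therefore suffices to compute $\Lambda_{\mathcal{D}}^*(2\theta)=\sup_t\bigl(2\theta t-\Lambda_{\mathcal{D}}(t)\bigr)$.

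\textbf{Interior points $\theta\in(0,1)$.} The function $\Lambda_{\mathcal{D}}$ is a log-sum-exp of affine functions of $t$ with distinct slopes $0,1,2$. It is therefore strictly convex, and its derivative increases from $0$ to $2$. So for each $2\theta\in(0,2)$ the supremum is attained at a unique stationary point $t^\star$. Setting $x=e^{t}$, the condition $\Lambda_{\mathcal{D}}'(t)=2\theta$ reads
\begin{equation*}
2\theta(1+\pi R x+\pi R^2x^2)=\pi R x+2\pi R^2 x^2,
\end{equation*}
which is equivalent to the quadratic
\begin{equation*}
2\pi R^2(1-\theta)x^2+\pi R(1-2\theta)x-2\theta=0.
\end{equation*}
The product of its roots is negative, so exactly one root is positive, and it is the one with the $+$ sign. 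Its discriminant, divided by $\pi^2R^2$, is $(1-2\theta)^2+\tfrac{16}{\pi}\theta(1-\theta)=4(1-\tfrac4\pi)\theta^2-4(1-\tfrac4\pi)\theta+1$. Hence
\begin{equation*}
x=\frac{2\theta-1+\sqrt{4(1-\frac4\pi)\theta^2-4(1-\frac4\pi)\theta+1}}{4R(1-\theta)}.
\end{equation*}
Using $\frac{2\theta-1}{4R(1-\theta)}=-\frac{1}{2R}+\frac{1}{4R(1-\theta)}$, this is exactly $e^{t^\star(\theta)}$ with $t^\star(\theta)$ as in \eqref{eq:t_theta}. Substituting back gives
\begin{equation*}
-\Lambda_{\mathcal{D}}^*(2\theta)=-2\theta t^\star(\theta)+\log\bigl(1+\pi R e^{t^\star(\theta)}+\pi R^2e^{2t^\star(\theta)}\bigr),
\end{equation*}
and adding the Gaussian term yields $g_\nu(\theta)$.

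\textbf{Endpoints.} At $\theta=0$, the objective $-\Lambda_{\mathcal{D}}(t)$ increases to $0$ as $t\to-\infty$. Thus $\Lambda^*_{\mathcal{D}}(0)=-\inf_t\Lambda_{\mathcal{D}}(t)=0$, giving $g_\nu(0)=\log(2\pi e\nu)$. At $\theta=1$, the function $2t-\Lambda_{\mathcal{D}}(t)=-\log(e^{-2t}+\pi Re^{-t}+\pi R^2)$ increases to $-\log(\pi R^2)$, so $\Lambda^*_{\mathcal{D}}(2)=-\log(\pi R^2)$. The Gaussian term $(1-\theta)\log\frac{2\pi e\nu}{1-\theta}$ tends to $0$ as $\theta\to1$ (by the convention $0\log(1/0)=0$), giving $g_\nu(1)=\log(\pi R^2)$.

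\textbf{Main obstacle: replacing $\sup$ by $\max$.} This requires continuity of $g_\nu$ on the closed interval $[0,1]$. Interior continuity follows because $t^\star(\theta)$ is continuous, by the implicit function theorem applied to a strictly increasing $\Lambda'_{\mathcal{D}}$. At the endpoints I would use that a convex conjugate is lower semicontinuous and finite on $[0,2]$. Since it is also convex, it is continuous on this closed interval, in particular at $0$ and $2$. Together with $(1-\theta)\log\frac{1}{1-\theta}\to0$, this makes $g_\nu$ continuous on $[0,1]$, so the supremum over the compact interval is attained.
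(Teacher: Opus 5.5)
Your proposal is correct and follows the same route as the paper. You insert the disk's intrinsic volumes $(1,\pi R,\pi R^2)$ into Lemma~\ref{lemma:1} with $d=2$, solve $\Lambda_{\mathcal{D}}'(t)=2\theta$ as a quadratic in $e^{t}$ to obtain $t^\star(\theta)$, handle $\theta=0,1$ through the limits $t\to\mp\infty$, and use continuity of $g_\nu$ on $[0,1]$ to turn the supremum into a maximum. Your endpoint treatment is more careful than the paper's: you state $\Lambda_{\mathcal{D}}^*(0)=0$ and $\Lambda_{\mathcal{D}}^*(2)=-\log(\pi R^2)$ explicitly, and you justify continuity with the one-dimensional convex plus lower-semicontinuous argument, which the paper only calls straightforward.
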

		
		\begin{IEEEproof}
			From \cite[Example 6.1.2]{Lotz2020}, we know that intrinsic volumes of the two-dimensional convex disk $\mathcal{A}$ of radius $R$ are given as the Euler characteristic $\alpha_0=1$, the semi-perimeter $\alpha_1=\pi R$, and the area $\alpha_2=\pi R^2$. By the definition \eqref{eq:LMG}, we have
			\begin{equation}
				\Lambda_{\mathcal{D}}(t)
				=
				\log\left(
				1+\pi Re^t+\pi R^2e^{2t}
				\right),
			\end{equation}
			for all $t\in\mathbb{R}$, whose conjugate function is given as
			\begin{flalign}
				\Lambda_{\mathcal{D}}^{\ast}(\lambda)
				=
				\sup_{t\in\mathbb{R}}
				\lambda t-\Lambda_{\!\mathcal{D}}(t)
			\end{flalign}
			for all $\lambda\in[0,2]$. Instead of characterizing the closed-form expression of the conjugate function $\Lambda_{\mathcal{D}}^{\ast}(\cdot)$, we directly consider the optimization problem involved in Eq.~\eqref{eq:growth_rate2} in what follows
			\begin{flalign}
				&~\ell_{\mathcal{D}}(\nu)\nonumber\\
				=&\sup_{\theta\in[0,1]}
				\left(
				-\Lambda_{\mathcal{D}}^*(2\theta)
				+(1-\theta)
				\log\frac{2\pi e\nu}{1-\theta}
				\right),\\
				=&\sup_{\theta\in[0,1]}
				\left(
				-\left(
				\sup_{t\in\mathbb{R}}
				2\theta t-\Lambda_{\!\mathcal{D}}(t)
				\right)
				+(1-\theta)
				\log\frac{2\pi e\nu}{1-\theta}
				\right).
				\label{eq:lemma2-1}
			\end{flalign}
			
			It can be easily verified that the objective function
			$\zeta_{\theta}(t)=-(2\theta t-\Lambda_{\!\mathcal{D}}(t))$
			is strongly convex in $t\in\mathbb{R}$. Note that its derivative is given as
			\begin{flalign}
				\frac{\dd\zeta_{\theta}}{\dd t}
				=
				-2\theta
				+
				\frac{
					\pi Re^t+2\pi R^2e^{2t}
				}{
					1+\pi Re^t+\pi R^2e^{2t}
				}.
			\end{flalign}
			Clearly, $\frac{\dd\zeta_0}{\dd t}>0$ and $\frac{\dd\zeta_1}{\dd t}<0$ for all $t\in\mathbb{R}$. Thus, we know that
			\begin{subequations}
				\begin{align}
					\inf_{t\in\mathbb{R}}\zeta_0(t)
					&=
					\lim_{t\to-\infty}\zeta_0(t)
					=g_{\nu}(0),
					\label{eq:prop_2_3}\\
					\inf_{t\in\mathbb{R}}\zeta_1(t)
					&=
					\lim_{t\to+\infty}\zeta_1(t)
					=g_{\nu}(1).
					\label{eq:prop_2_4}
				\end{align}
			\end{subequations}
			For $\theta\in(0,1)$, by the condition on the extreme point, i.e., ${\dd\zeta_{\theta}}/{\dd t}=0$, we know that $\zeta_{\theta}(t)$ is minimized at $t=t^{\star}(\theta)$ (see Eq.~\eqref{eq:t_theta}), and hence $\inf_{t\in\mathbb{R}}\zeta_{\theta}(t)=g_{\nu}(\theta)$. It is straightforward to prove the continuity of $g_{\nu}(\theta)$ on $[0,1]$, thereby concluding the lemma.
		\end{IEEEproof}
		
		\begin{proposition}\label{cor:disc_bound}
			For the channel~\eqref{sub-channel}, the capacity is upper bounded as
			\begin{equation}\label{eq:disc_ub}
				\const{C}(r,R,\mathcal{E})
				\le
				\ell_{\mathcal{D}}\left(\frac{\mathcal{E}}{2}\right)-\hh(N).
			\end{equation}
		\end{proposition}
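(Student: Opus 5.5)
We will combine the volume characterization of $\hh(\bar{\mathbf{Y}})$ with the typical-set bound \eqref{eq:bound_typical_set}, and then evaluate the resulting Minkowski-sum volume with Lemma~\ref{lemma:1} applied to the convex relaxation $\mathcal{D}$.

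First, since $\mathcal{A}\subseteq\mathcal{D}$, we have $\mathcal{A}^n\subseteq\mathcal{D}^n$. Minkowski sums are monotone under inclusion, so
\begin{equation*}
\mathsf{vol}_{2n}\bigl(\mathcal{A}^n\oplus\mathcal{B}_{2n}(\sqrt{n\mathcal{E}})\bigr)\le\mathsf{vol}_{2n}\bigl(\mathcal{D}^n\oplus\mathcal{B}_{2n}(\sqrt{n\mathcal{E}})\bigr).
\end{equation*}

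Next, we match the ball radius to the parametrization of Lemma~\ref{lemma:1}. With $d=2$, the lemma uses radius $\sqrt{nd\nu}=\sqrt{2n\nu}$. Setting $\nu=\mathcal{E}/2$ gives exactly $\sqrt{n\mathcal{E}}$.

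Taking $\frac1n\log$, then $\limsup_{n\to\infty}$, then $\epsilon\to0_+$, and using the volume expression for $\hh(\bar{\mathbf{Y}})$ from \cite[Eq.~(4.5)]{jog2016geometric}, we obtain
\begin{equation*}
\hh(\bar{\mathbf{Y}})\le\ell_{\mathcal{D}}\left(\frac{\mathcal{E}}{2}\right).
\end{equation*}
The right-hand side is computed by the preceding lemma as $\max_{\theta\in[0,1]}g_{\mathcal{E}/2}(\theta)$. This bound holds for every admissible input distribution supported on $\mathcal{A}$, and the right-hand side does not depend on that distribution. Maximizing over inputs in \eqref{eq:capacity} and using $\hh(Y)=\hh(\bar{\mathbf{Y}})$ and $\hh(N)=\hh(\bar{\mathbf{N}})$ then yields \eqref{eq:disc_ub}.

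The main obstacle is justifying \eqref{eq:bound_typical_set} rigorously for each fixed $\epsilon$ and $n$. The concern is that typical output sequences need not lie exactly in $\mathcal{A}^n\oplus\mathcal{B}_{2n}(\sqrt{n\mathcal{E}})$: the empirical noise energy may slightly exceed $n\mathcal{E}$. I would handle this as follows.
\begin{itemize}
\item Intersect the typical set with the high-probability event $\{\|\mathbf{N}^n\|^2\le n(\mathcal{E}+\delta)\}$, which has probability tending to one by the law of large numbers.
\item Note that this intersection still has volume $e^{n(\hh(\bar{\mathbf{Y}})-o(1))}$, by the standard AEP volume lower bound restricted to a high-probability subset.
\item Use the inclusion into $\mathcal{D}^n\oplus\mathcal{B}_{2n}\bigl(\sqrt{n(\mathcal{E}+\delta)}\bigr)$ to get $\hh(\bar{\mathbf{Y}})\le\ell_{\mathcal{D}}\bigl((\mathcal{E}+\delta)/2\bigr)$.
\item Let $\delta\to0$, using the continuity of $\ell_{\mathcal{D}}$ in $\nu$. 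Continuity follows since the objective in \eqref{eq:growth_rate2} is continuous and monotone in $\nu$, uniformly over $\theta$.
\end{itemize}
Since the paper treats \eqref{eq:bound_typical_set} as given, the proof itself mainly reduces to the monotonicity and rescaling steps above.
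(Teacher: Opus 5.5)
Your proposal is correct and is essentially the paper's proof: relax $\mathcal{A}\subseteq\mathcal{D}$, take $d=2$ and $\nu=\mathcal{E}/2$ so that $\sqrt{nd\nu}=\sqrt{n\mathcal{E}}$, pass to the limit in the volume characterization of $\hh(\bar{\mathbf{Y}})$, and subtract $\hh(N)$. Your $\delta$-enlargement patch improves on the paper, which simply asserts \eqref{eq:bound_typical_set}; phrase it as intersecting $\mathcal{T}^{\epsilon}_n$ with the output set $\mathcal{A}^n\oplus\mathcal{B}_{2n}(\sqrt{n(\mathcal{E}+\delta)})$, whose probability is at least that of the noise event $\{\|\mathbf{N}^n\|^2\le n(\mathcal{E}+\delta)\}$.
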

		
		\begin{IEEEproof}
			It follows from Eq.~\eqref{eq:growth_rate1} and
			$\mathsf{vol}_{2n}(\mathcal{T}^{\epsilon}_n)
			\le
			\mathsf{vol}_{2n}
			(\mathcal{D}^n\oplus\mathcal{B}_{2n}(\sqrt{n\mathcal{E}}))$
			that
			\begin{flalign}
				\hh(\bar{\mathbf{Y}})
				&=
				\lim_{\epsilon\to0_{+}}
				\lim_{n\to+\infty}
				\frac{1}{n}
				\log\left(
				\mathsf{vol}_{2n}
				\left(
				\mathcal{T}^{\epsilon}_n
				\right)
				\right)\\
				&\le
				\ell_{\mathcal{D}}\left(\mathcal{E}/2\right).
				\label{eq:proof_1}
			\end{flalign}
			The proof is completed by combining Eqs.~\eqref{eq:proof_1} and \eqref{eq:capacity}.
		\end{IEEEproof}
		
		\subsubsection{Upper Bound via Polar Decomposition}
		
		We switch our gear back to the complex-valued channel output $Y$, which can be written as $Y=R_Y\cdot\exp(j\Theta_Y)$ via the polar decomposition, where $R_Y$ and $\Theta_Y$ is the modulus and the argument of $Y$. It can be shown that the differential entropy $\hh(Y)$ satisfies
		\begin{flalign}
			\hh(Y)
			=
			\hh(R_Y)
			+
			\hh(\Theta_Y|R_Y)
			+
			\mathbb{E}[\log R_Y],
		\end{flalign}
		which can be easily proved by the change of variables. It is straightforward to see
		\begin{flalign}
			\hh(\Theta_Y|R_Y)
			\le
			\hh(\Theta_Y)
			\le
			\log(2\pi),
		\end{flalign}
		which immediately leads to
		\begin{flalign}\label{eq:ubd_polar}
			\hh(Y)
			\le
			\hh(R_Y)
			+
			\mathbb{E}[\log R_Y]
			+
			\log(2\pi).
		\end{flalign}
		
		One can regard the modulus variable $R_Y=|S+N|$ as the sum of the modulus of the input $S$, denoted by the nonnegative random variable $R_S$, and some signal-dependent noise term $R_N=R_Y-R_S$ with bounded variance. In the following proposition, we present an upper bound on $\hh(R_Y)$ via some useful statistical properties of real-valued random variables $R_S$ and $R_N$.
		
		\begin{proposition}\label{prop:2}
			Let $\mathcal{L}\triangleq[r,R]$, whose exponential growth rate function is denoted by
			\begin{flalign}\label{eq:growth_rate_L}
				\ell_{\mathcal{L}}(\nu)
				&\triangleq
				\limsup_{n\to\infty}
				\frac{1}{n}
				\log\left(
				\mathsf{vol}_{n}
				\left(
				\mathcal{L}^n
				\oplus
				\mathcal{B}_{n}\bigl(\sqrt{n\nu}\bigr)
				\right)
				\right).
			\end{flalign}
			The random variables $R_S$ and $R_N$ satisfy
			\begin{subequations}
				\begin{align}
					\supp R_S
					&\subseteq\mathcal{L},
					\label{eq:prop_2_1}\\
					\mathsf{var}(R_N)
					&\le\mathcal{E}.
					\label{eq:prop_2_2}
				\end{align}
			\end{subequations}
			and the differential entropy of $R_Y$ can be upper-bounded as follows
			\begin{flalign}\label{eq:diff_ent_RY}
				\hh(R_Y)\le\ell_{\mathcal{L}}(\mathcal{E}).
			\end{flalign}
		\end{proposition}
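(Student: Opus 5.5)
The plan is to prove the two support/variance claims \eqref{eq:prop_2_1}--\eqref{eq:prop_2_2} directly from the triangle inequality. Then, for \eqref{eq:diff_ent_RY}, I would rerun the typical-set volume argument that led to Proposition~\ref{cor:disc_bound}, now in one real dimension with the convex segment $\mathcal{L}$ in place of the disk $\mathcal{D}$.

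\emph{Step 1 (support and variance).} Since $R_S=|S|$ and $r\le|S|\le R$ almost surely, \eqref{eq:prop_2_1} is immediate. For \eqref{eq:prop_2_2}, write $R_N=|S+N|-|S|$. The reverse triangle inequality gives $|R_N|\le|N|$ pointwise, hence
\begin{equation*}
\mathsf{var}(R_N)\le\mathbb{E}[R_N^2]\le\mathbb{E}[|N|^2]=\mathbb{E}[\bar{N}_1^2+\bar{N}_2^2]\le\mathcal{E}.
\end{equation*}
The pointwise bound $|R_N|\le|N|$ is in fact the property I will really use in Step 2, rather than the variance bound itself. This matters because $R_N$ depends on $R_S$, so no independence-based argument (e.g.\ one relying on the structure $R_Y=R_S+R_N$ with independent summands) is available.

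\emph{Step 2 (volume of a high-probability set).} Take $n$ i.i.d.\ uses, with $R_Y^n=R_S^n+R_N^n$. By the weak law of large numbers applied to $|N_k|^2$, for any $\delta>0$ the event
\begin{equation*}
\sum_{k=1}^n|N_k|^2\le n(\mathcal{E}+\delta)
\end{equation*}
has probability tending to one. On this event $\|R_N^n\|_2\le\sqrt{n(\mathcal{E}+\delta)}$, and $R_S^n\in\mathcal{L}^n$ always. Therefore, deterministically on this event,
\begin{equation*}
R_Y^n\in\mathcal{L}^n\oplus\mathcal{B}_n\bigl(\sqrt{n(\mathcal{E}+\delta)}\bigr).
\end{equation*}
I would then invoke the standard fact from the AEP that every set carrying probability at least $1-\epsilon$ under the i.i.d.\ law of $R_Y^n$ has volume at least $e^{n(\hh(R_Y)-\delta')}$ for large $n$. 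This is the same mechanism behind the identity relating $\hh(\bar{\mathbf{Y}})$ to $\mathsf{vol}_{2n}(\mathcal{T}^{\epsilon}_n)$ used before \eqref{eq:bound_typical_set}. Combining the two gives $\hh(R_Y)-\delta'\le\frac1n\log\mathsf{vol}_n\bigl(\mathcal{L}^n\oplus\mathcal{B}_n(\sqrt{n(\mathcal{E}+\delta)})\bigr)$. Taking $\limsup_n$ and comparing with the definition \eqref{eq:growth_rate_L} yields $\hh(R_Y)\le\ell_{\mathcal{L}}(\mathcal{E}+\delta)+\delta'$.

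\emph{Step 3 (remove $\delta$).} Let $\delta,\delta'\to0$. This requires right-continuity of $\nu\mapsto\ell_{\mathcal{L}}(\nu)$. Since $\mathcal{L}$ is a convex body in $\mathbb{R}^1$ with intrinsic volumes $\alpha_0=1$ and $\alpha_1=R-r$, Lemma~\ref{lemma:1} with $d=1$ applies. It expresses $\ell_{\mathcal{L}}$ as the supremum over $\theta\in[0,1]$ of functions that are continuous and nondecreasing in $\nu$ and uniformly controlled in $\theta$, namely $-\Lambda_{\mathcal{L}}^*(\theta)+\frac{1-\theta}{2}\log\frac{2\pi e\nu}{1-\theta}$. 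Continuity of this supremum in $\nu$ follows, for instance, from the explicit one-dimensional maximization, which is elementary here.

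I expect the main obstacle to be this limiting step: making the AEP volume bound and the continuity of $\ell_{\mathcal{L}}$ fully rigorous. If continuity is awkward, the fallback is to note directly that the volume of the Minkowski sum depends continuously on the radius. In one dimension, the Steiner-type formula gives an explicit polynomial in the radius, which yields the same conclusion.
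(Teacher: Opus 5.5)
Your proof is correct. Step~1 is exactly the paper's argument. You get \eqref{eq:prop_2_1} from $|S|\in[r,R]$, and \eqref{eq:prop_2_2} from the reverse triangle inequality $|R_N|\le|N|$ via $\mathsf{var}(R_N)\le\mathbb{E}[R_N^2]\le\mathbb{E}[|N|^2]$.

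The difference lies in how you obtain \eqref{eq:diff_ent_RY}. The paper gets it in one line by invoking \cite[Thm.~3.7.7]{Jog2014thesis} for the pair $(R_S,R_N)$. You derive it from scratch, in three moves:
\begin{itemize}
\item you show that $R_Y^n\in\mathcal{L}^n\oplus\mathcal{B}_n(\sqrt{n(\mathcal{E}+\delta)})$ with high probability;
\item you lower-bound the volume of any such high-probability set by the AEP;
\item you remove $\delta$ through right-continuity of $\ell_{\mathcal{L}}$.
\end{itemize}
The citation buys brevity. Your route buys self-containedness, and it makes explicit why the dependence of $R_N$ on $S$ is harmless. Only the pointwise domination $|R_N|\le|N|$ and the i.i.d.\ structure across channel uses enter, so no input--noise independence is needed. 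The paper leaves this point implicit when it feeds a signal-dependent $R_N$ into the cited result. Working with a high-probability set plus a $\delta$-slack is also more careful than the literal containment of the typical set in a Minkowski sum of radius exactly $\sqrt{n\mathcal{E}}$, as asserted in \eqref{eq:bound_typical_set}.

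Your primary continuity argument via Lemma~\ref{lemma:1} is fine, since every function inside the supremum is $\tfrac12$-Lipschitz in $\log\nu$. The fallback, however, is phrased loosely. The Minkowski sum lives in $\mathbb{R}^n$, not in one dimension, so continuity of the Steiner polynomial for each fixed $n$ does not by itself control the growth rate. What you need there is a bound uniform in $n$. For example, the inclusion $\mathcal{L}^n\oplus\mathcal{B}_n((1+\eta)\rho)\subseteq t+(1+\eta)\bigl(\mathcal{L}^n\oplus\mathcal{B}_n(\rho)\bigr)$ holds for a suitable translate $t$, and it yields $\ell_{\mathcal{L}}(\nu(1+\eta)^2)\le\ell_{\mathcal{L}}(\nu)+\log(1+\eta)$.
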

		
		\begin{IEEEproof}
			Eq.~\eqref{eq:prop_2_1} follows from the annulus support constraint $|S|\in[r,R]$. By the triangle inequality, we have
			\begin{flalign}
				|R_N|
				&=
				|R_Y-R_S|\\
				&\le|N|,
			\end{flalign}
			thus leading to
			$\mathsf{var}(R_N)\le\mathbb{E}[|R_N|^2]\le\mathbb{E}[|N|^2]=\mathcal{E}$.
			Then the upper bound \eqref{eq:diff_ent_RY} can be derived by using \cite[Thm. 3.7.7]{Jog2014thesis}.
		\end{IEEEproof}
		
		The following lemma characterizes the exponential growth rate function for closed intervals.
		
		\begin{lemma}[{\cite[Theorem~3.7.1]{Jog2014thesis}}]\label{lem:jog_core}
			Let $\mathcal{A}_{L}$ be a closed interval of length $L$. Its exponential growth rate function is given by
			\begin{flalign}\label{eq:growth_rate_A}
				\ell_{\mathcal{A}_{L}}(\nu)
				&\triangleq
				\limsup_{n\to\infty}
				\frac{1}{n}
				\log\left(
				\mathsf{vol}_{n}
				\left(
				\mathcal{A}_{L}^n
				\oplus
				\mathcal{B}_{n}\bigl(\sqrt{n\nu}\bigr)
				\right)
				\right),
				\\
				&=
				\HH_b(\theta^{\ast})
				+
				(1-\theta^{\ast})\log(L)
				+
				\frac{\theta^{\ast}}{2}
				\log\frac{2\pi e\nu}{\theta^{\ast}}
			\end{flalign}
			where the binary entropy function
			$\HH_b(\theta)=-\theta\log\theta-(1-\theta)\log(1-\theta)$,
			and $\theta^\ast\in[0,1]$ is the unique solution to the following equation
			\begin{equation}
				\frac{(1-\theta^\ast)^2}{\theta^{\ast3}}
				=
				\frac{L^2}{2\pi\nu}.
			\end{equation}
		\end{lemma}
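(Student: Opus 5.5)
The plan is to specialize Lemma~\ref{lemma:1} to dimension $d=1$ with $\mathcal{K}=\mathcal{A}_L$, compute the convex conjugate in closed form, and then solve the resulting one-dimensional concave maximization explicitly. Since Minkowski sums and volumes are translation invariant, we may take $\mathcal{A}_L=[0,L]$. A one-dimensional convex body has intrinsic volumes $\alpha_0=1$ (Euler characteristic) and $\alpha_1=L$ (length). Hence by \eqref{eq:LMG} we get $\Lambda_{\mathcal{A}_L}(t)=\log(1+Le^{t})$. Note also that $\mathcal{B}_n(\sqrt{n\nu})$ matches $\mathcal{B}_{nd}(\sqrt{nd\nu})$ when $d=1$, so \eqref{eq:growth_rate_A} is exactly \eqref{eq:growth_rate1}.

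The first step is the conjugate. For $\lambda\in(0,1)$, setting the derivative of $\lambda t-\log(1+Le^t)$ to zero gives $Le^{t}=\lambda/(1-\lambda)$. Substituting back yields
\begin{equation*}
\Lambda_{\mathcal{A}_L}^{*}(\lambda)=\lambda\log\lambda+(1-\lambda)\log(1-\lambda)-\lambda\log L .
\end{equation*}
This is the familiar Bernoulli relative-entropy form. The endpoint values $\lambda\in\{0,1\}$ follow by taking limits $t\to\mp\infty$, and they agree with the formula under the convention $0\log 0=0$.

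Next, plug this into \eqref{eq:growth_rate2} with $d=1$. This gives
\begin{equation*}
\ell_{\mathcal{A}_L}(\nu)=\sup_{\theta\in[0,1]}\Bigl(\HH_b(\theta)+\theta\log L+\tfrac{1-\theta}{2}\log\tfrac{2\pi e\nu}{1-\theta}\Bigr).
\end{equation*}
Reparametrizing $\theta\mapsto1-\theta$ and using the symmetry $\HH_b(\theta)=\HH_b(1-\theta)$ turns the objective into
\begin{equation*}
f(\theta)=\HH_b(\theta)+(1-\theta)\log L+\tfrac{\theta}{2}\log\tfrac{2\pi e\nu}{\theta},
\end{equation*}
which is exactly the expression claimed in the lemma.

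The last step, which is the main point needing care, is to show that the supremum is attained at a unique interior point characterized by the stated equation. The function $f$ is strictly concave on $[0,1]$. Indeed, $\HH_b$ is strictly concave, $(1-\theta)\log L$ is linear, and $-\tfrac{\theta}{2}\log\theta$ is concave. Differentiating, and using that the derivative of $\tfrac{\theta}{2}\log\tfrac{2\pi e\nu}{\theta}$ equals $\tfrac12\log\tfrac{2\pi\nu}{\theta}$, gives
\begin{equation*}
f'(\theta)=\log\tfrac{1-\theta}{\theta}-\log L+\tfrac12\log\tfrac{2\pi\nu}{\theta}.
\end{equation*}
This tends to $+\infty$ as $\theta\to0_+$ and to $-\infty$ as $\theta\to1_-$, so the maximizer lies in the interior. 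Setting $f'(\theta)=0$ and exponentiating yields $\frac{(1-\theta)^2}{\theta^2}\cdot\frac{2\pi\nu}{\theta}=L^2$, which rearranges to $\frac{(1-\theta^\ast)^2}{\theta^{\ast3}}=\frac{L^2}{2\pi\nu}$. Uniqueness of the solution follows because the left-hand side decreases strictly and continuously from $+\infty$ to $0$ on $(0,1]$, equivalently from the strict concavity of $f$. Evaluating $f(\theta^\ast)$ then gives the lemma.
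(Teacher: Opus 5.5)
Your argument is correct. The paper does not prove this lemma; it simply cites Jog's thesis. Your derivation is therefore a genuinely different, self-contained route. You obtain the result as the $d=1$ case of Lemma~\ref{lemma:1}: there $\Lambda_{\mathcal{A}_L}(t)=\log(1+Le^{t})$ has the Bernoulli-type conjugate $-\HH_b(\lambda)-\lambda\log L$. You then solve the resulting strictly concave one-dimensional maximization explicitly.

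This buys two things. First, internal consistency: Lemma~\ref{lem:jog_core} becomes a corollary of Lemma~\ref{lemma:1} rather than a second black-box citation. Second, an actual justification that $\theta^\ast$ exists, lies in $(0,1)$, and is unique, which the paper only asserts.

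The price is that everything still rests on Lemma~\ref{lemma:1}. Its proof (Steiner's formula for $\mathcal{K}^n$ plus a large-deviation estimate on the intrinsic volumes of the product) is heavier than needed in dimension one. For the cube you can instead write directly
\[
\mathsf{vol}_n\bigl([0,L]^n\oplus\mathcal{B}_n(\rho)\bigr)=\sum_{k=0}^{n}\binom{n}{k}L^{n-k}\omega_k\rho^k,
\]
where $\omega_k$ is the volume of the unit $k$-ball and $\rho=\sqrt{n\nu}$. Applying Stirling's formula to the $k=\theta n$ term gives exactly your objective $f(\theta)$. Here $\theta$ is already the ball fraction, so no $\theta\mapsto 1-\theta$ flip is needed. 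Summing $n+1$ terms costs only a factor invisible at exponential scale. This route makes the binary entropy appear transparently as $\frac1n\log\binom{n}{\theta n}$, without any convex duality.
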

		
		Then, another upper bound on $\const{C}(r,R,\mathcal{E})$ is given as follows.
		
		\begin{proposition}\label{cor:amp_bound}
			For the channel~\eqref{sub-channel}, the capacity is upper bounded as
			\begin{flalign}\label{eq:amp_ub}
				&\const{C}(r,R,\mathcal{E})
				\nonumber\\
				&~
				\le
				\ell_{\mathcal{A}_{R-r}}(\mathcal{E})
				+
				\log(2\pi)
				+
				\frac{1}{2}\log(R^2+\mathcal{E})
				-
				\hh(N).
			\end{flalign}
		\end{proposition}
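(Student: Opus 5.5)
The plan is to start from the polar-decomposition bound \eqref{eq:ubd_polar}, $\hh(Y)\le\hh(R_Y)+\mathbb{E}[\log R_Y]+\log(2\pi)$, and bound its two input-dependent terms separately. The bounds must hold uniformly over every admissible input distribution with $r\le|S|\le R$. Once that is done, we maximize over the input and subtract $\hh(N)$ as in \eqref{eq:capacity}, which yields \eqref{eq:amp_ub}.

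\textbf{The entropy term $\hh(R_Y)$.} Proposition~\ref{prop:2} gives $\hh(R_Y)\le\ell_{\mathcal{L}}(\mathcal{E})$ with $\mathcal{L}=[r,R]$. Since $\mathcal{L}$ is a closed interval of length $L=R-r$, we have $\ell_{\mathcal{L}}=\ell_{\mathcal{A}_{R-r}}$. This holds because the volume of $\mathcal{L}^n\oplus\mathcal{B}_n(\sqrt{n\nu})$ is invariant under translating $\mathcal{L}$. Lemma~\ref{lem:jog_core} then identifies the growth rate explicitly, so
\[
\hh(R_Y)\le\ell_{\mathcal{A}_{R-r}}(\mathcal{E}).
\]
The only point to double-check is that the hypothesis of \cite[Thm.~3.7.7]{Jog2014thesis} invoked in Proposition~\ref{prop:2} is met. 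That hypothesis is satisfied by a second-moment bound on the additive perturbation $R_N$, and the proof of Proposition~\ref{prop:2} already gives $\mathbb{E}[R_N^2]\le\mathcal{E}$.

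\textbf{The term $\mathbb{E}[\log R_Y]$.} Jensen's inequality applied to the concave logarithm gives
\[
\mathbb{E}[\log R_Y]=\tfrac12\mathbb{E}[\log R_Y^2]\le\tfrac12\log\mathbb{E}[|S+N|^2].
\]
Because $S$ and $N$ are independent and $N$ has zero mean, the cross term vanishes. Hence
\[
\mathbb{E}[|S+N|^2]=\mathbb{E}[|S|^2]+\mathbb{E}[|N|^2]\le R^2+\mathcal{E},
\]
where the last step uses $|S|\le R$ and $\mathsf{var}(N)=\mathcal{E}$ with zero mean. This gives $\mathbb{E}[\log R_Y]\le\frac12\log(R^2+\mathcal{E})$.

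\textbf{Assembly.} Adding $\log(2\pi)$ from $\hh(\Theta_Y|R_Y)\le\log(2\pi)$, the three bounds together give
\[
\hh(Y)\le\ell_{\mathcal{A}_{R-r}}(\mathcal{E})+\log(2\pi)+\tfrac12\log(R^2+\mathcal{E})
\]
for every admissible input. Subtracting $\hh(N)$ and taking the maximum over inputs completes the proof of \eqref{eq:amp_ub}.

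The main obstacle is the rigorous justification of the $\hh(R_Y)$ step. $R_N$ depends on $S$, so $R_Y$ is not a sum of independent variables, and the typical-set volume argument must rely only on the support constraint on $R_S$ and the second-moment bound on $R_N$. I would take this from the cited theorem of Jog, as the paper does in Proposition~\ref{prop:2}, rather than re-derive it. I would also note the convention that $\mathbb{E}[\log R_Y]$ may be $-\infty$ in degenerate cases, which only strengthens the bound.
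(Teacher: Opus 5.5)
Your proposal is correct and follows the paper's proof essentially step for step. You bound $\hh(R_Y)$ by Proposition~\ref{prop:2} with $\mathcal{L}=[r,R]$ of length $R-r$, bound $\mathbb{E}[\log R_Y]$ by Jensen's inequality together with $\mathbb{E}[|S+N|^2]\le R^2+\mathcal{E}$, and assemble everything via Eq.~\eqref{eq:ubd_polar}, exactly as the paper does; your extra justifications (translation invariance of the Minkowski-sum volume, the cross term vanishing by independence and zero mean, and using the second moment of $R_N$ rather than its variance) only make explicit what the paper leaves implicit.
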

		
		\begin{IEEEproof}
			By Proposition~\ref{prop:2}, we obtain
			$\hh(R_Y)\le\ell_{\mathcal{A}_{R-r}}(\mathcal{E})$
			since the length of $\mathcal{L}$ is $R-r$.
			Next, using Jensen's inequality and the Cauchy-Schwarz inequality to bound the expectation term $\mathbb{E}[\log R_Y]$ as follows:
			\begin{equation}
				\mathbb{E}[\log R_Y]
				=
				\frac{1}{2}\mathbb{E}[\log R_Y^2]
				\leq
				\frac{1}{2}\log\mathbb{E}[R_Y^2].
			\end{equation}
			Since
			$\mathbb{E}[R_Y^2]=\mathbb{E}[|S+N|^2]\leq R^2+\mathcal{E}$,
			we obtain
			$\mathbb{E}[\log R_Y]\leq\frac{1}{2}\log(R^2+\mathcal{E})$.
			We conclude the proposition by combining our derived upper bounds on each term for the right-hand side term in Eq.~\eqref{eq:ubd_polar}.
		\end{IEEEproof}
		
		\section{Numerical Results}
		
		This section is devoted to numerical verification of our derived bounds.
		
		\subsection{Scalar AWGN Channels}
		
		We first consider the scalar AWGN channel with an annular input constraint, where the noise variance is exactly $\mathcal{E}$. In Fig.~\eqref{fig:bounds}, we compare the upper bounds~\eqref{eq:disc_ub} and~\eqref{eq:amp_ub} with the EPI lower bound~\eqref{eq:EPI_lower_bound} for a narrow annulus ($r=0.8$ and $R=1$) and a wide annulus ($r=0.2$ and $R=1$). It can be seen that, in each case, our derived upper bounds gradually approach the EPI-based lower bound as the SNR increases. It can also be noted that, for the wide annulus, the upper bound~\eqref{eq:disc_ub} is better than the other, since the loss of support relaxation is relatively small.
		
		\begin{figure}[htbp]
			\centering
			\begin{subfigure}[b]{0.67\columnwidth}
				\centering
				\includegraphics[width=\linewidth]{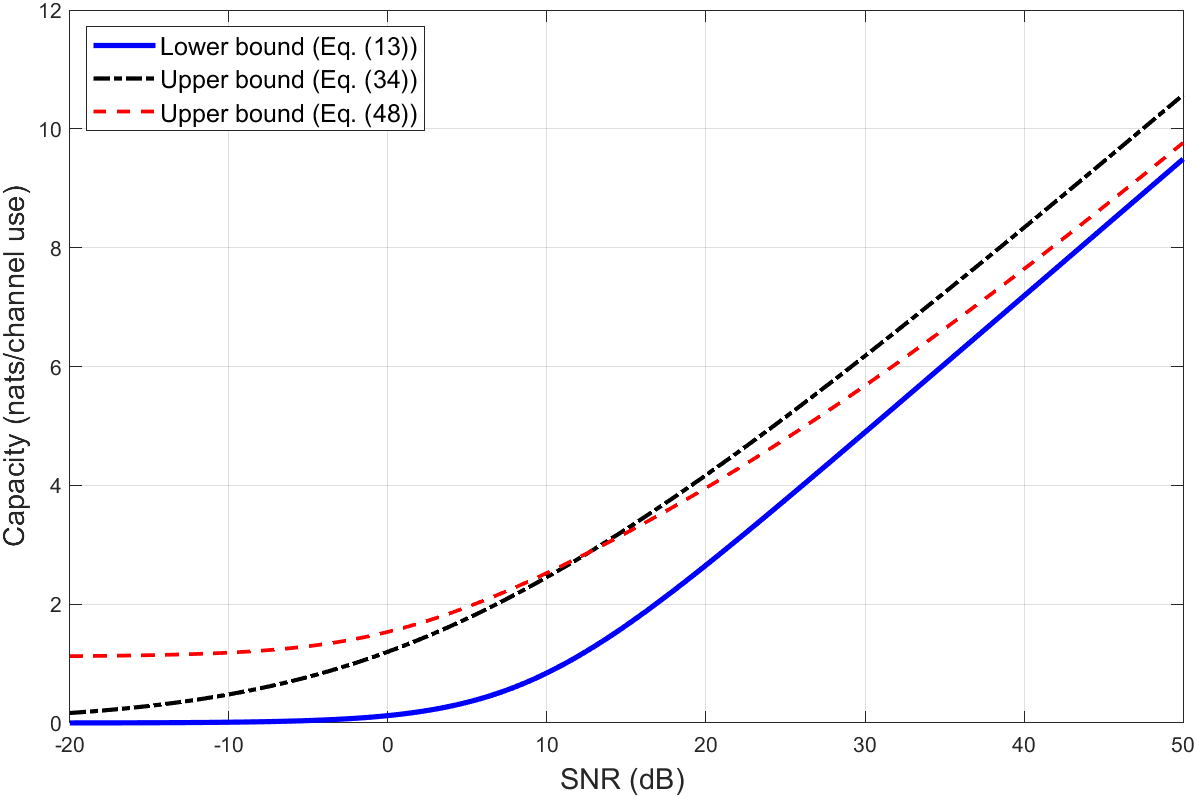}
				\caption{$r=0.8$ and $R=1$ (narrow annulus)}
				\label{fig:left}
			\end{subfigure}
			\hfill
			\begin{subfigure}[b]{0.67\columnwidth}
				\centering
				\includegraphics[width=\linewidth]{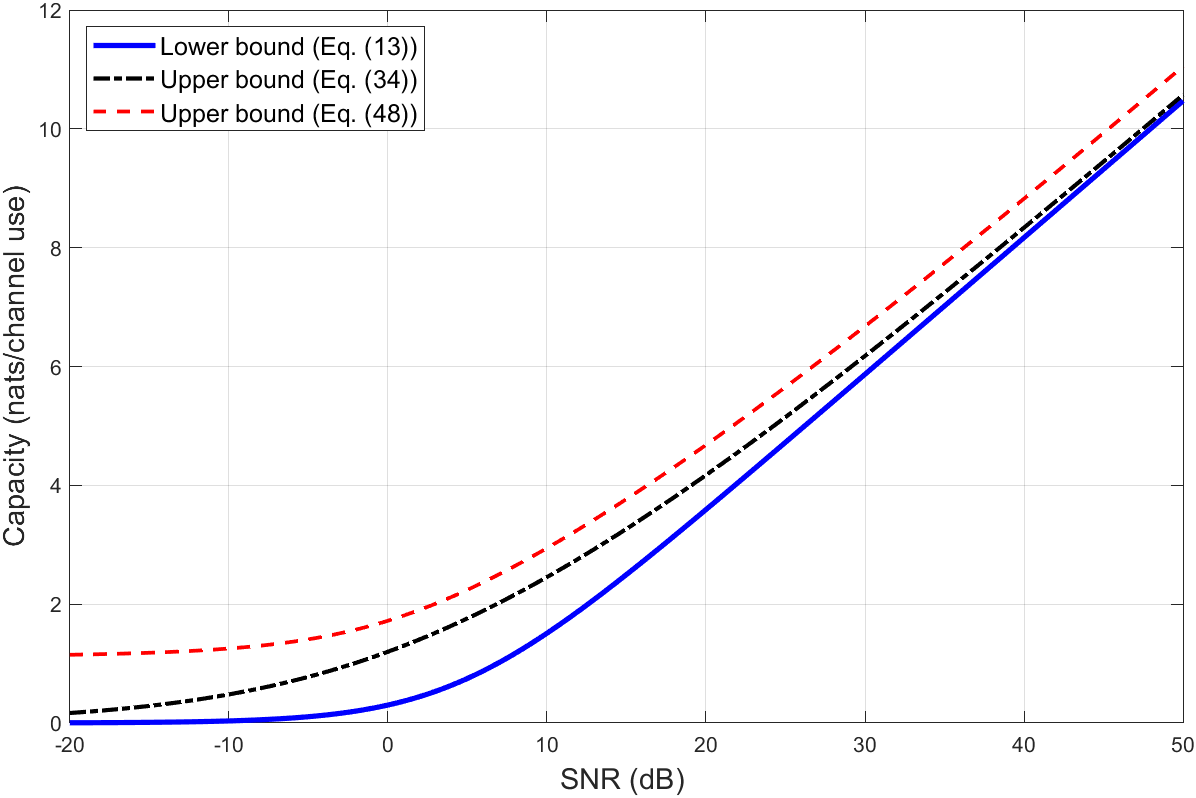}
				\caption{$r=0.2$ and $R=1$ (wide annulus)}
				\label{fig:right}
			\end{subfigure}
			\caption{Capacity bounds for scalar AWGN channels with annulus constellation geometry.}
			\label{fig:bounds}
		\end{figure}
		
		\subsection{MIMO Phase-Modulated Channel}
		
		Then we use our derived bounds to evaluate the achievable rate $\const{R}_{\text{total}}$ of the transceiver architecture based on the CP algorithm for the MIMO phase-modulated channels. We consider the scenario with $\nr=4$ and $\nt=1024$, i.e., the number of transmit antennas is much larger than that of receive antennas, thereby facilitating the feasibility of the CP algorithm. We assume a slow-fading channel matrix $\mathsf{H}$ with all entries independently drawn from the complex Gaussian distribution $\mathcal{CN}(0,1)$.
		
		In Fig.~\ref{fig:capacity_rayleigh}, we numerically evaluate the average throughput of the first three sub-channels for the case of $\nr=4$ receive antennas\footnote{As shown in \cite{chen2024when} and \cite{sun2026grouped}, the beamforming scheme can be used in the last sub-channel, which is asymptotically optimal in the low-SNR regime. For this reason, we only consider the first three sub-channels in our analysis.}, for which the differential entropy of the equivalent non-Gaussian noise is numerically computed by the Monte-Carlo method. The total achievable rate is obtained as the sum of the average throughputs of these sub-channels. The gap between our derived bounds is relatively small as the SNR increases, just like in Fig.~\ref{fig:bounds}. A different phenomenon is that, except for the first one, the average throughputs of all subchannel will saturate at high SNRs, which can be attributed to the existence of self-interference brought by the CP algorithm.
		
		\begin{figure}[htbp]
			\centering
			\begin{subfigure}[b]{0.24\textwidth}
				\centering
				\includegraphics[width=\textwidth]{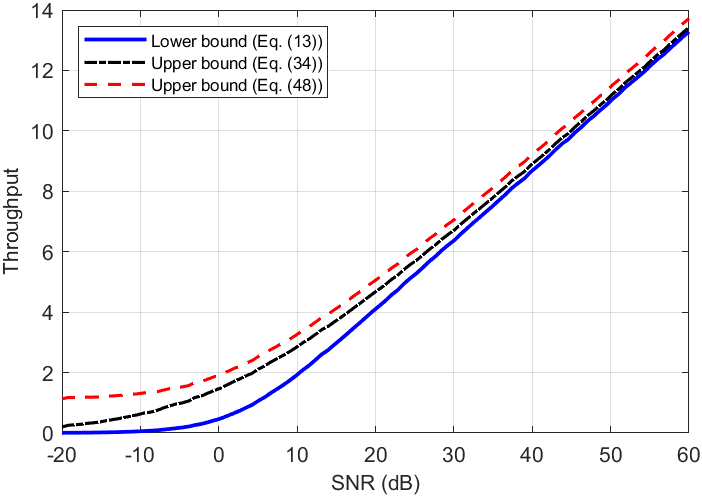}
				\caption{Sub-channel 1.}
				\label{fig:ray_sub1}
			\end{subfigure}
			\hfill
			\begin{subfigure}[b]{0.24\textwidth}
				\centering
				\includegraphics[width=\textwidth]{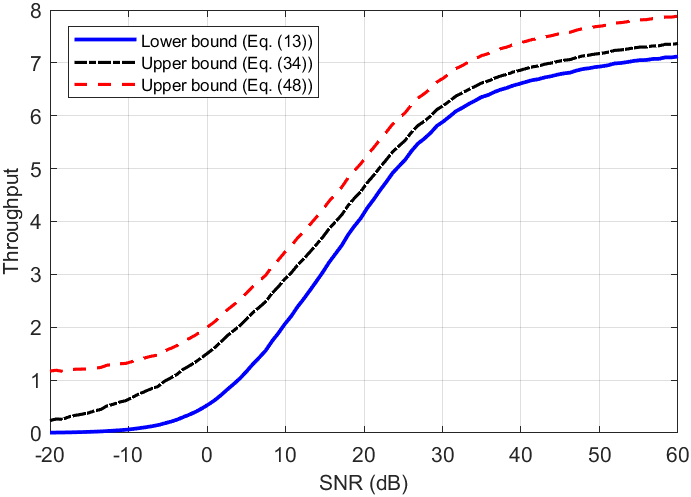}
				\caption{Sub-channel 2.}
				\label{fig:ray_sub2}
			\end{subfigure}
			
			\vspace{0.35cm}
			
			\begin{subfigure}[b]{0.24\textwidth}
				\centering
				\includegraphics[width=\textwidth]{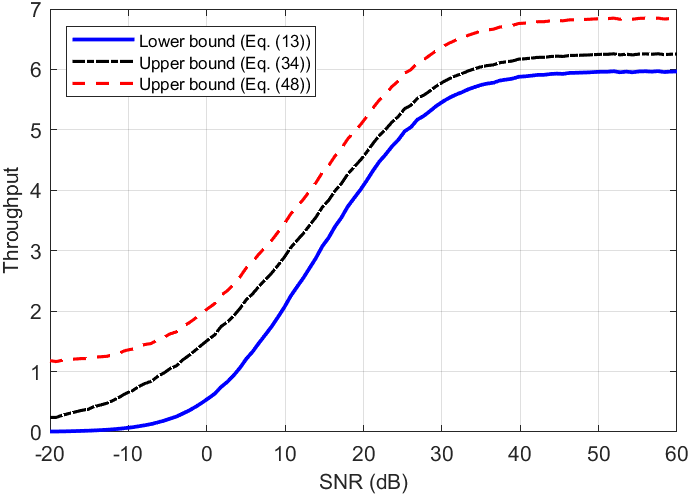}
				\caption{Sub-channel 3.}
				\label{fig:ray_sub3}
			\end{subfigure}
			\hfill
			\begin{subfigure}[b]{0.24\textwidth}
				\centering
				\includegraphics[width=\textwidth]{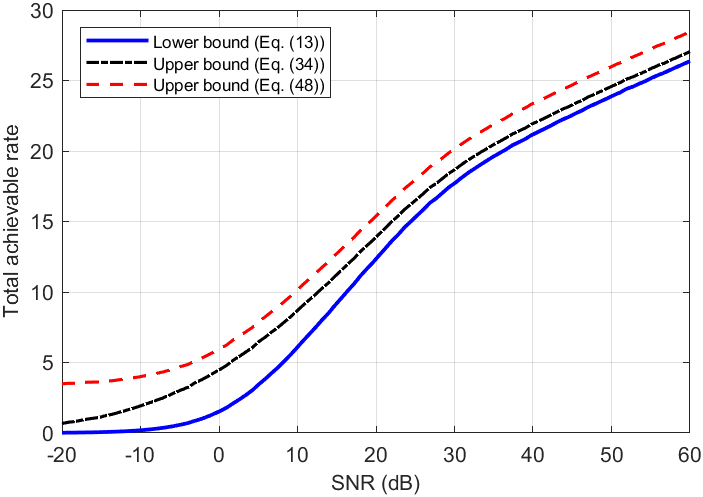}
				\caption{Total achievable rate.}
				\label{fig:ray_total}
			\end{subfigure}
			\caption{Bounds on average throughputs of the transceiver based on the CP algorithm for MIMO phase-modulated channels.}
			\label{fig:capacity_rayleigh}
		\end{figure}
		
		\subsection{RIS-Assisted Symbiotic Communications}
		
		We switch the gears to the achievable rate of a symbiotic communication system with a single transmit antenna, four receive antennas, and an RIS with $1024$ reflecting elements. The channel model follows \cite[Sec. II-B]{sun2026grouped} with the inter-element spacing $d=\lambda/8$. The channel attenuation coefficients are set to $\mu_{\mathrm{LOS}}=-60$ dB (direct path), $\mu_{\mathrm{RR}}=-5$ dB (RIS-to-receiver link), and $\mu_{\mathrm{TR}}=-5$ dB (transmitter-to-RIS link), corresponding to a weak line-of-sight scenario where the RIS-reflected path dominates.
		
		As shown in Fig.~\ref{fig:capacity_ris}, the numerical results for the RIS-assisted channel exhibit a similar trend to the MIMO phase-modulated channels above, but with some differences. Due to the strong spatial correlation caused by the dense RIS element spacing, the CP algorithm efficiently pairs highly correlated columns, resulting in a narrow annulus constellation geometry. In this regime, the upper bound~\eqref{eq:amp_ub} is better than the upper bound~\eqref{eq:disc_ub} at high SNRs. Moreover, it yields a much smaller residual interference variance, so the achievable rate starts to saturate at much higher SNRs. It can be foreseen that with an even larger RIS array, the residual interference can be made negligibly small, making it easier to approach the maximum multiplexing gain.
		
		\begin{figure}[htbp]
			\centering
			\begin{subfigure}[b]{0.24\textwidth}
				\centering
				\includegraphics[width=\textwidth]{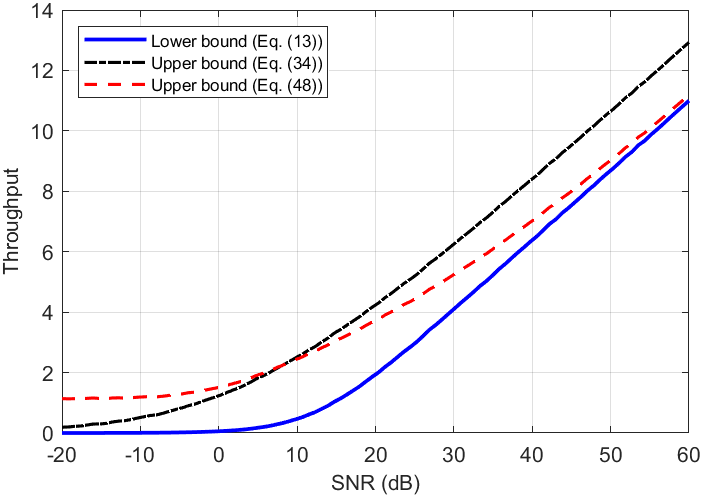}
				\caption{Sub-channel 1.}
				\label{fig:ris_sub1}
			\end{subfigure}
			\hfill
			\begin{subfigure}[b]{0.24\textwidth}
				\centering
				\includegraphics[width=\textwidth]{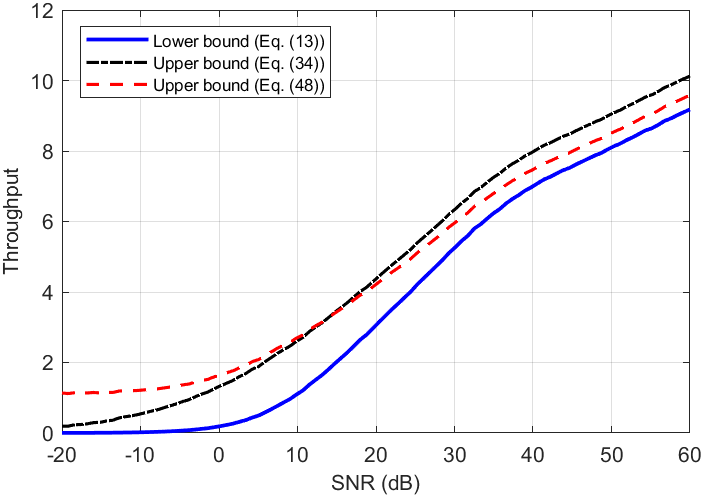}
				\caption{Sub-channel 2.}
				\label{fig:ris_sub2}
			\end{subfigure}
			
			\vspace{0.35cm}
			
			\begin{subfigure}[b]{0.24\textwidth}
				\centering
				\includegraphics[width=\textwidth]{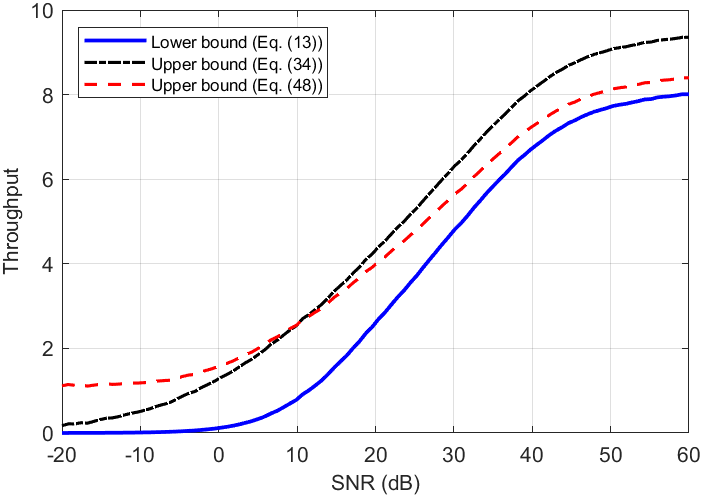}
				\caption{Sub-channel 3.}
				\label{fig:ris_sub3}
			\end{subfigure}
			\hfill
			\begin{subfigure}[b]{0.24\textwidth}
				\centering
				\includegraphics[width=\textwidth]{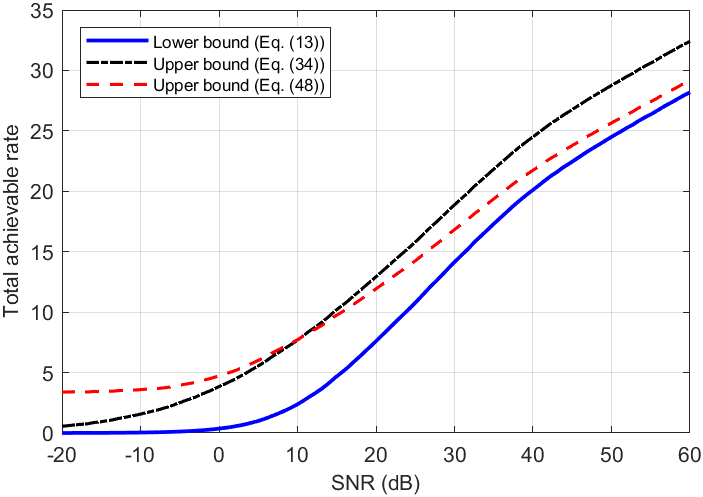}
				\caption{Total achievable rate.}
				\label{fig:ris_total}
			\end{subfigure}
			\caption{Bounds on average throughputs of the transceiver based on the CP algorithm for RIS-assisted symbiotic communication channels.}
			\label{fig:capacity_ris}
		\end{figure}
		
		\section{Conclusion}
		
		In this paper, we evaluated the performance of a specific CP-decomposition-based transceiver architecture for MIMO phase-modulated channels. We derived upper and lower bounds on the achievable rate of each sub-channel under an annular constraint and non-Gaussian interference. The framework was successfully applied to two scenarios: spatially uncorrelated MIMO channels and spatially correlated RIS-assisted channels. The results show that when the number of transmit antennas far exceeds the number of receive antennas, the system achieves higher multiplexing gains over a wider SNR range. Moreover, stronger spatial correlation pushes the interference limit to even higher SNR, enabling almost full multiplexing gain over an extended region.
		
		\bibliographystyle{ieeetr}
		\bibliography{biblio}
		
	\end{document}